\theoremstyle{plain}
\newtheorem{theorem}{Theorem}[section]
\newtheorem{proposition}[theorem]{Proposition}
\newtheorem{lemma}[theorem]{Lemma}
\theoremstyle{definition}
\newtheorem{definition}[theorem]{Definition}
\newtheorem{assumption}[theorem]{Assumption}
\theoremstyle{remark}
\newtheorem{remark}[theorem]{Remark}
\begin{document}
	
	\twocolumn[
	\icmltitle{On the Information-Theoretic Fragility of Robust Watermarking under Diffusion Editing}

	% \printAffiliationsAndNotice{}  % blank for anonymity or \icmlEqualContribution if needed
	% It is OKAY to include author information, even for blind
	% submissions: the style file will automatically remove it for you
	% unless you've provided the [accepted] option to the icml2025
	% package.
	
	% List of affiliations: The first argument should be a (short)
	% identifier you will use later to specify author affiliations
	% Academic affiliations should list Department, University, City, Region, Country
	% Industry affiliations should list Company, City, Region, Country
	
	% You can specify symbols, otherwise they are numbered in order.
	% Ideally, you should not use this facility. Affiliations will be numbered
	% in order of appearance and this is the preferred way.
	% \icmlsetsymbol{equal}{*}
	
	\begin{icmlauthorlist}
		\icmlauthor{Yunyi Ni}{}
		\icmlauthor{Ziyu Yang}{}
		\icmlauthor{Ze Niu}{}
		\icmlauthor{Emily Davis}{}
		\icmlauthor{Finn Carter}{}
	\end{icmlauthorlist}
	
	\begin{icmlauthorlist}
		{Xidian University}
	\end{icmlauthorlist} 
	
	%
	% \icmlcorrespondingauthor{Firstname1 Lastname1}{first1.last1@xxx.edu}
	% \icmlcorrespondingauthor{Firstname2 Lastname2}{first2.last2@www.uk}
	
	% You may provide any keywords that you
	% find helpful for describing your paper; these are used to populate
	% the "keywords" metadata in the PDF but will not be shown in the document
	\icmlkeywords{Machine Learning, ICML}
	
	\vskip 0.3in
	]
	
	% this must go after the closing bracket ] following \twocolumn[ ...
	
	% This command actually creates the footnote in the first column
	% listing the affiliations and the copyright notice.
	% The command takes one argument, which is text to display at the start of the footnote.
	% The \icmlEqualContribution command is standard text for equal contribution.
	% Remove it (just {}) if you do not need this facility.
	
	% \printAffiliationsAndNotice{}  % leave blank if no need to mention equal contribution
	% \printAffiliationsAndNotice{\icmlEqualContribution} % otherwise use the standard text.
	
\begin{abstract}
	Robust invisible watermarking embeds hidden information in images such that the watermark can survive various manipulations. However, the emergence of powerful diffusion-based image generation and editing techniques poses a new threat to these watermarking schemes. In this paper, we investigate the intersection of diffusion-based image editing and robust image watermarking. We analyze how diffusion-driven image edits can significantly degrade or even fully remove embedded watermarks from state-of-the-art robust watermarking systems. Both theoretical formulations and empirical experiments are provided. We prove that as a image undergoes iterative diffusion transformations, the mutual information between the watermarked image and the embedded payload approaches zero, causing watermark decoding to fail. We further propose a guided diffusion attack algorithm that explicitly targets and erases watermark signals during generation. We evaluate our approach on recent deep learning-based watermarking schemes and demonstrate near-zero watermark recovery rates after attack, while maintaining high visual fidelity of the regenerated images. Finally, we discuss ethical implications of such watermark removal capablities and provide design guidelines for future watermarking strategies to be more resilient in the era of generative AI.
\end{abstract}

\section{Introduction}
Digital watermarking is a longstanding technique for embedding hidden messages or identifiers into images for purposes such as copyright protection, provenance tracking, and authentication. An ideal robust watermark is \emph{imperceptible} to human observers yet can be reliably extracted even after the image undergoes various distortions or processing operations \cite{Tancik2020, Bui2025}. Traditional robust watermarking schemes focused on surviving affine transformations, compression, or noise, but recent advances in generative models introduce fundamentally new challenges.

Diffusion-based image generation models (e.g., Stable Diffusion, DALL-E) can produce and edit images with unprecedented fidelity by iteratively refining random noise into coherent images. These models enable powerful image editing capabilities, such as content insertion, style transfer, or even \emph{image-to-image} translations, that go well beyond conventional image processing. Critically, diffusion-based editing can re-synthesize large portions of an image or even regenerate an image from scratch using learned generative priors. This raises the question of whether robust watermarks can survive such \emph{semantic} transformations, which are qualitatively different from classic noise or cropping attacks.

Recent research indicates that generative image editing poses a serious threat to watermark integrity. For example, Ni \emph{et al.} \cite{Ni2025} demonstrated that a diffusion-driven ``image regeneration'' process can effectively erase invisible watermarks while preserving the high-level visual content of the image. In their study, even state-of-the-art learned watermarking methods (including deep learing approaches like StegaStamp and TrustMark) failed to retain detectable signals after diffusion model processing \cite{Ni2025}. These findings highlight a fundamental vulnerability: the very generative power that enables complex edits can be leveraged to \emph{wash out} embedded watermarks that were designed to resist conventional perturbations.

In this paper, we present a comprehensive investigation of diffusion-based image editing and its impact on robust watermarking systems. We focus on modern deep learning watermarking schemes --- notably \textbf{StegaStamp} \cite{Tancik2020}, \textbf{VINE} \cite{Lu2025}, and \textbf{TrustMark} \cite{Bui2025} --- which represent the state-of-the-art in imperceptible and resilient image watermarking. StegaStamp was one of the earliest deep learning watermarking systems, achieving robust message embedding that survives physical re-capturing of images \cite{Tancik2020}. TrustMark improved upon prior work by introducing a novel frequency-domain loss and a re-watermarking strategy, yielding watermarks that are both high-fidelity (PSNR $>$ 43~dB) and robust to a range of distortions \cite{Bui2025}. More recently, Lu \emph{et al.} proposed VINE \cite{Lu2025}, which leverages generative model priors (from a diffusion model) and targeted training augmentations to significantly enhance watermark resilience against both local and global image edits. Despite these advancements, our analysis shows that none of these methods are immune to \emph{diffusion-based attacks}, wherein a generative model is used to intentionally or unintentionally remove the hidden watermark content.

Our key contributions are summarized as follows: (1) We analyze the effects of diffusion-based image editing on robust invisible watermarks and show that advanced diffusion processes can severely degrade or outright eliminate embedded watermarks in images. (2) We present a novel \textbf{guided diffusion attack} algorithm that explicitly targets watermark signals during the diffusion generation process, achieving near-complete removal of the watermark while maintaining the perceptual quality of the image. (3) We provide a theoretical formulation and proof demonstrating that as the diffusion process progresses, the mutual information between the watermarked image and the original embedded message approaches zero, explaining the empirial failure of watermark extraction. (4) We conduct extensive experiments on multiple recent watermarking schemes (StegaStamp, TrustMark, VINE) across a variety of scenarios, reporting quantitative results in terms of watermark decoding success rates and image quality metrics. We include pseudocode and tables to illustrate our methodologies and findings in detail. (5) Finally, we discuss the ethical implications of easily erasing watermarks (e.g., for misinformation or copyright evasion), and propose guidelines for designing future watermarking techniques that can better withstand generative model-based transformations.

\section{Related Work}
\subsection{Robust Image Watermarking}
\noindent \textbf{Classic and Deep Learning-based Watermarks.} The concept of robust image watermarking has a rich history in the computer vision and security communities. Early watermarking methods (e.g., Cox \emph{et al.}, 1997) embedded information in transform domains (DCT or wavelet) with spread-spectrum techniques to survive noise and compression. Modern approaches often employ deep neural networks to learn an end-to-end encoder-decoder for watermarking. For instance, HiDDeN \cite{Zhu2018} introduced a deep autoencoder framework to hide data in images, training with differentiable distortions to improve robustness against simple digital attacks. However, HiDDeN primarily considered standard perturbations (e.g., JPEG compression, noise) and was not evaluated on more complex transformations.

StegaStamp \cite{Tancik2020} was a seminal deep learning watermarking system focusing on \textit{physical} robustness. Tancik \emph{et al.} trained an encoder and decoder to hide a 100-bit payload in an image such that it could be recovered even after the image was printed out and photographed (simulating perspective distortions, lighting changes, etc.). By including a variety of differentiable image corruptions (noise, cropping, projective transformation) in the training loop, StegaStamp achieved around 95\% bit recovery on captured photos, while keeping the encoded image visually nearly identical to the original \cite{Tancik2020}. This demonstrated the power of end-to-end learning for watermarking under complex real-world distortions. Nevertheless, StegaStamp and contemporaneous methods were mainly tested against a fixed set of hand-modeled perturbations.

Subsequent works have further improved the robustness and practicality of learned watermarks. \textbf{TrustMark} \cite{Bui2025} (Bui \emph{et al.}, ICCV 2025) proposed a watermarking model for image provenance that incorporates a novel spatio-spectral loss to better preserve high-frequency details of the host image. TrustMark's encoder-decoder network, augmented with a $1\times1$ convolutional embedding layer, produces watermarked images with imperceptible changes (often $>$43 dB PSNR) and enables public decoding (no secret key) for provenance applications \cite{Bui2025}. TrustMark is notably resilient to many ``in-place'' distortions (e.g., noise, blurs) as well as certain ``out-of-place'' transformations (where content is added/removed), and it introduces a removal network (ReMark) to deliberately remove its watermark for re-use cases. \textbf{VINE} \cite{Lu2025} (ICLR 2025) takes a further step by leveraging generative priors for robustness. By analyzing the frequency characteristics of diffusion-based edits, Lu \emph{et al.} identified that many generative image manipulations induce blurring or low-pass filtering effects on the embedded signal. VINE trains the watermark encoder with surrogate attacks (such as random blur operations) to simulate diffusion-induced distortions, and also adapts a large diffusion model (SDXL) to serve as a learned prior that helps embed watermarks more imperceptibly and robustly \cite{Lu2025}. Empirically, VINE outperforms earlier methods on a comprehensive benchmark called W-Bench, which tests watermark survival under a wide range of editing techniques including local and global text-driven edits \cite{Lu2025}. Despite these advances, even VINE's robust watermarks are shown to be vulnerable to more aggressive diffusion regeneration attacks, as we will explore in this work.

\subsection{Diffusion Models for Image Editing}
Diffusion models have revolutionized image generation and editing by allowing controllable transformations that preserve coherent visual content. In a typical diffusion-based image editor, an input image can be progressively noised and then denoised (often with guidance from a text prompt or another condition) to yield a modified image. This process can achieve complex edits like altering an object's appearance or synthesizing new background elements~\cite{lu2023tf,zhou2025dragflow}, which traditional image filters or simple GAN-based edits cannot easily match. However, diffusin edits do not simply apply pixel-level changes; they essentially \emph{re-synthesize} image content via the model's learned distribution. This property can inadvertently discard subtle signals that are not strongly tied to the main visual semantics of the image. Indeed, an invisible watermark --- being a high-frequency or low-amplitude signal by design --- might not be preserved by a generative model that is mainly concerned with reconstructing perceptually salient structures.

Ni \emph{et al.}'s recent preprint \cite{Ni2025} provided a first look at using diffusion models as a tool to attack watermarks. They showed that by applying an \textit{image regeneration} process (diffusing an image to a latent state and denoising it back with no specific condition except perhaps a weak identity guidance), one can \emph{erase} the hidden watermark while the output image still looks almost the same as the original to a human observer \cite{Ni2025}. This underscores that diffusion models, in striving to produce a clean and realistic image, effectively act as a form of filtering that removes unnatural high-frequency artifacts or patterns it does not recognize. A robust watermark, unfortunately, may appear as exactly such an artifact from the model's perspective. In our work, we build on this insight and also introduce a stronger \emph{guided attack} that uses the watermark's decoder to focus the diffusion model explicitly on destroying the watermark content.

\subsection{Concept Erasure and Watermark Removal}
\textbf{Concept erasure in diffusion models} has recently emerged as a research direction aiming to remove or suppress specific semantic concepts from a model's generation capability~\cite{gao2024eraseanything}. Interestingly, this paradigm is directly relevant to watermark distortion: one can think of a hidden watermark pattern as a "concept" embedded in the image that an adversary may wish to erase. In fact, techniques developed for concept erasure can be repurposed to target watermarks. For example, methods like MACE (Mass Concept Erasure) \cite{Lu2024} and ANT \cite{li2025ant} fine-tune or adjust a diffusion model to eliminate certain concepts (e.g., a particular object or artist's style) from all generated outputs. If we treat the presence of a particular watermark signal as a concept, a similar approach could be employed to ensure the model's outputs contain no trace of that watermark. Likewise, recent frameworks such as SCORE (Secure Concept-Oriented Robust Erasure) \cite{Fu2025} use adversarial optimization to minimize the mutual information between the model's output and the erased concept, with formal guarantees of removal. This is analogous to minimizing the information linking the output image to the original watermark payload. These connections imply that the problem of watermark removal via diffusion is closely related to concept erasure: both involve intervening in the generative process to \emph{prevent a specific subset of information from appearing in the output}. Conversely, understanding concept erasure techniques can inform watermark \textit{preservation} strategies. For instance, to design watrmarks resilient to diffusion, one might ensure the watermark signal is entangled with the core image content (making it a "concept" the model \emph{must} reproduce), or use counter-training so that the diffusion model learns to \emph{retain} that concept unless deliberately instructed not to. We discuss such potential strategies later in the paper. Overall, the interplay between concept erasure research and watermark robustness is a fertile ground for developing new defenses and attacks.

\section{Methodology}
\subsection{Problem Formulation}
We consider an image $I \in \mathbb{R}^{H\times W\times 3}$ and a hidden message $m$ consisting of $B$ bits. A robust watermarking system defines an encoder function $E(I, m) = I_w$ that produces a \emph{watermarked image} $I_w$, and a decoder function $D(I_w) = \hat{m}$ that extracts an estimate of the message. For a given $I$, the encoder aims to embed $m$ such that $I_w$ is visually indistinguishable from $I$ (typically measured by high PSNR or low $\ell_2$ distortion) while allowing $D$ to recover $m$ even after $I_w$ undergoes a set of transformations $\mathcal{T}$. Classical choices for $\mathcal{T}$ include compression, noise, geometric transforms, etc. In modern learned watermarking (e.g., \cite{Tancik2020, Bui2025, Lu2025}), $\mathcal{T}$ is often simulated during training by differentiable augmentation layers so that $E$ and $D$ become jointly robust to those perturbations.

In this work, we focus on a new class of transformations: $\mathcal{T}_{diff}$, defined by \emph{diffusion-based image editing procedures}. A diffusion model $M$ defines a noising process $q(x_t|x_{t-1})$ and a denoising model $p_\theta(x_{t-1}|x_t)$ that for $t$ from $T$ down to $1$ attempts to reverse the noise addition. In an image-to-image editing setting, one can take the watermarked image $I_w$ as a starting point, add noise to it gradually to obtain a highly degraded image $x_T$ (effectively random noise), and then run the generative reverse process to produce a new image $I'$. If no constraint is given, $I'$ is simply a random sample from the model's prior (unrelated to $I_w$). However, practical image editing uses some conditioning to make $I'$ resemble the original content of $I_w$ (for instance, by providing a text prompt describing $I_w$, or by using a technique like classifier-free guidance that interpolates between pure generative and identity). In our scenario, the attacker seeks to remove the watermark while preserving $I$'s main visual content. We formalize the attacker’s goal as producing an output image $I'$ such that: (a) $I'$ is perceptually similar to $I_w$ (and hence also to the original $I$), and (b) the decoded message from $I'$ is incorrect or yields an error (ideally, $\hat{m}'$ appears as random noise to the decoder).

We consider two levels of attacker knowledge: a \emph{weak attacker} who simply uses diffusion regeneration without explicit knowledge of the watermarking scheme, and a \emph{strong attacker} who has access to the watermark decoder (or a substitute model) and can therefore guide the removal. The weak attacker corresponds to someone who, for example, just runs an image through a popular diffusion-based image filter or does an image-to-image generation with the same ontent description. The strong attacker represents a motivated adversary who knows that a watermark is present and is willing to invest computational effort to specifically destroy it. Our experiments address both scenarios, but our primary focus is on the strong attacker, as this yields the most definitive evaluation of the watermarking system's resilience.

\subsection{Diffusion-Based Watermark Removal Attack}
We now describe our proposed attack pipeline. Throughout, we model a (possibly stochastic) watermarking scheme as a pair of maps
\[
\mathcal{E} : \mathcal{I} \times \mathcal{M} \to \mathcal{I}, 
\qquad 
\mathcal{D} : \mathcal{I} \to \mathcal{M},
\]
where $\mathcal{I}$ denotes the image space (pixels or latents), $\mathcal{M}$ the message space, $\mathcal{E}$ the encoder producing a watermarked image $I_w = \mathcal{E}(I_{\mathrm{clean}}, m)$ from a clean image $I_{\mathrm{clean}}$ and message $m \in \mathcal{M}$, and $\mathcal{D}$ the decoder (detector). An attacker is any mapping $\mathcal{A} : \mathcal{I} \to \mathcal{I}$; the goal of the watermark designer is to make $(\mathcal{D} \circ \mathcal{A} \circ \mathcal{E})(I_{\mathrm{clean}}, m)$ recover $m$ with probability significantly higher than random guessing for all “reasonable’’ $\mathcal{A}$, whereas our diffusion-based attack aims to construct a specific $\mathcal{A}$ for which this fails.

Formally, we consider the following pipeline for a given watermarked input $I_w$.

\medskip
\noindent \textbf{1. Forward Diffusion (Image Degradation):} 
We choose a diffusion model $M$ pre-trained for image generation (in our experiments, a Stable Diffusion variant fine-tuned for image-to-image tasks). We first convert $I_w$ into the model's latent space if the model uses a VAE encoder; otherwise, we work directly in pixel space. We then simulate the forward diffusion process by adding Gaussian noise in $T$ increments:
\[
x_T \sim q(x_T \mid x_0 = I_w),
\]
where $q(x_T\mid x_0)$ denotes applying $T$ diffusion steps to reach an almost pure noise latent $x_T$. In practice, we sample
\[
x_T = \sqrt{\bar{\alpha}_T} \, z_0 + \sqrt{1-\bar{\alpha}_T}\, \epsilon, 
\qquad 
\epsilon \sim \mathcal{N}(0, I),
\]
where $z_0$ is the latent of $I_w$ and $\bar{\alpha}_T = \prod_{t=1}^T \alpha_t$ is the cumulative product of noise reduction factors up to $T$. For sufficiently large $T$, $x_T$ loses almost all information about $I_w$ (including the watermark).

For the theoretical analysis, it is convenient to view the forward process as a time-inhomogeneous Markov chain on $\mathbb{R}^d$ (latent space):
\[
x_t = \sqrt{\alpha_t}\, x_{t-1} + \sqrt{1-\alpha_t}\,\epsilon_t,\qquad \epsilon_t \sim \mathcal{N}(0,I),\ t=1,\dots,T.
\]
Then $x_T$ is a Gaussian perturbation of $x_0$ with total signal-to-noise ratio (SNR) controlled by $\bar{\alpha}_T$.

\begin{definition}[Watermark perturbation model]
	We say a watermarking encoder is \emph{additive and bounded} in latent space if there exists a clean latent code $z_{\mathrm{clean}}$ such that
	\[
	z_0 = z_{\mathrm{clean}} + \delta_m,
	\]
	where $\delta_m \in \mathbb{R}^d$ depends on the message $m$ and satisfies $\|\delta_m\|_2 \leq \rho$ for all $m \in \mathcal{M}$.
\end{definition}

This additive model covers a broad class of watermarking schemes that operate by injecting a small signal $\delta_m$ into the representation (e.g., spread-spectrum or frequency-domain watermarks).

\begin{proposition}[SNR decay of watermark under forward diffusion]
	\label{prop:snr-decay}
	Under the additive and bounded model above, we have
	\[
	x_T = \sqrt{\bar{\alpha}_T}\,(z_{\mathrm{clean}} + \delta_m) + \sqrt{1-\bar{\alpha}_T}\,\epsilon,
	\quad \epsilon \sim \mathcal{N}(0,I),
	\]
	and the (per-coordinate) SNR contributed by the watermark $\delta_m$ satisfies
	\[
	\mathrm{SNR}_T^{(\mathrm{wm})} 
	\;\propto\; \frac{\bar{\alpha}_T \|\delta_m\|_2^2}{1-\bar{\alpha}_T}
	\;\le\; \frac{\bar{\alpha}_T \rho^2}{1-\bar{\alpha}_T}.
	\]
	In particular, if $\bar{\alpha}_T \to 0$ as $T \to \infty$, then $\mathrm{SNR}_T^{(\mathrm{wm})} \to 0$.
\end{proposition}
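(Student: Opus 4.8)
The plan is to first derive the closed-form marginal of the forward chain at time $T$, then read off the watermark signal-to-noise ratio and bound it. The key structural fact is the standard diffusion reparametrization: unrolling the recursion $x_t = \sqrt{\alpha_t}\,x_{t-1} + \sqrt{1-\alpha_t}\,\epsilon_t$ collapses the $T$ independent Gaussian increments into a single Gaussian. I would prove by induction on $t$ that $x_t = \sqrt{\bar\alpha_t}\,x_0 + \sqrt{1-\bar\alpha_t}\,\bar\epsilon_t$ with $\bar\epsilon_t \sim \mathcal{N}(0,I)$; the inductive step substitutes the hypothesis for $x_{t-1}$ and uses that a sum of independent centered Gaussians with variances $\alpha_t(1-\bar\alpha_{t-1})$ and $1-\alpha_t$ is Gaussian with variance $\alpha_t(1-\bar\alpha_{t-1}) + (1-\alpha_t) = 1-\alpha_t\bar\alpha_{t-1} = 1-\bar\alpha_t$, which is exactly the telescoping identity for $\bar\alpha_t = \prod_{s\le t}\alpha_s$. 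Setting $t = T$ and $x_0 = z_0 = z_{\mathrm{clean}} + \delta_m$ (invoking the additive and bounded model) yields the displayed expression for $x_T$.

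Next I would isolate the watermark-carrying component. In the decomposition $x_T = \sqrt{\bar\alpha_T}\,z_{\mathrm{clean}} + \sqrt{\bar\alpha_T}\,\delta_m + \sqrt{1-\bar\alpha_T}\,\epsilon$, the term $\sqrt{\bar\alpha_T}\,\delta_m$ is the only part that depends on the message $m$; the attenuated clean latent and the injected Gaussian are, from the standpoint of watermark recovery, nuisance/noise. The injected Gaussian contributes per-coordinate variance $1-\bar\alpha_T$. Defining $\mathrm{SNR}_T^{(\mathrm{wm})}$ as the ratio of the per-coordinate watermark signal power $\bar\alpha_T\|\delta_m\|_2^2/d$ to this noise variance gives
\[
\mathrm{SNR}_T^{(\mathrm{wm})} = \frac{\bar\alpha_T\,\|\delta_m\|_2^2 / d}{1-\bar\alpha_T} \;\propto\; \frac{\bar\alpha_T\,\|\delta_m\|_2^2}{1-\bar\alpha_T},
\]
the proportionality constant $1/d$ being independent of $T$ and $m$.

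The inequality then follows at once from the hypothesis $\|\delta_m\|_2 \le \rho$, since $\bar\alpha_T/(1-\bar\alpha_T) \ge 0$ and $u \mapsto \bar\alpha_T u/(1-\bar\alpha_T)$ is nondecreasing. For the limit, note $\bar\alpha_T \in (0,1)$, so $\bar\alpha_T \to 0$ forces the numerator $\bar\alpha_T\rho^2 \to 0$ while the denominator $1-\bar\alpha_T \to 1$; hence the upper bound, and therefore $\mathrm{SNR}_T^{(\mathrm{wm})}$ itself, tends to $0$. I do not expect a genuine obstacle: the only point requiring care is making the SNR definition precise, i.e.\ being explicit that $z_{\mathrm{clean}}$ is folded into the nuisance part so that the tracked quantity is purely message-dependent. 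As a secondary remark I would note that $\bar\alpha_T \to 0$ is not merely a hypothesis but holds for essentially every schedule used in practice: whenever $\sum_t (1-\alpha_t) = \infty$ (in particular whenever $\alpha_t$ is bounded away from $1$ infinitely often), $\bar\alpha_T = \prod_t \alpha_t \to 0$.
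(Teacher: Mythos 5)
Your proposal is correct and follows essentially the same route as the paper's proof: identify $\sqrt{\bar{\alpha}_T}\,\delta_m$ as the message-dependent signal and $\sqrt{1-\bar{\alpha}_T}\,\epsilon$ as the noise, take the ratio of their (per-coordinate) powers, bound by $\rho$, and let $\bar{\alpha}_T \to 0$. The only additions — the inductive derivation of the closed-form marginal (which the paper simply asserts) and the remark that $\bar{\alpha}_T \to 0$ under standard schedules — are harmless elaborations, not a different argument.
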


\begin{proof}
	Conditioned on $m$, the contribution of the watermark to $x_T$ is the deterministic term $\sqrt{\bar{\alpha}_T}\,\delta_m$, while the noise term is $\sqrt{1-\bar{\alpha}_T}\,\epsilon$. The SNR (in squared norm) is therefore proportional to
	\[
	\frac{\|\sqrt{\bar{\alpha}_T}\,\delta_m\|_2^2}{\mathbb{E}\|\sqrt{1-\bar{\alpha}_T}\,\epsilon\|_2^2}
	= \frac{\bar{\alpha}_T \|\delta_m\|_2^2}{(1-\bar{\alpha}_T)\,\mathbb{E}\|\epsilon\|_2^2}.
	\]
	Since $\mathbb{E}\|\epsilon\|_2^2$ is a constant depending only on dimension $d$ and $\|\delta_m\|_2 \le \rho$, the stated bound follows. As $T\to\infty$ and $\bar{\alpha}_T \to 0$, the numerator vanishes while the denominator stays bounded away from $0$, implying $\mathrm{SNR}_T^{(\mathrm{wm})} \to 0$.
\end{proof}

\begin{theorem}[Asymptotic information erasure of watermark]
	\label{thm:info-erasure}
	Let $M$ be a random message and $X_T$ the terminal latent after $T$ forward steps starting from the watermarked latent $z_0$ following the Markov chain above. Suppose $\bar{\alpha}_T \to 0$ as $T\to\infty$. Then
	\[
	\lim_{T\to\infty} I(M; X_T) = 0,
	\]
	where $I(\cdot;\cdot)$ denotes mutual information.
\end{theorem}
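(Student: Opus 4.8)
The plan is to reduce the mutual-information statement to a single Gaussian Kullback--Leibler computation. By \Cref{prop:snr-decay} (equivalently, by collapsing the forward chain in the usual way), conditioned on $M=m$ the terminal latent is exactly Gaussian,
\[
X_T \mid \{M=m\} \;\sim\; \mathcal{N}\!\bigl(\sqrt{\bar\alpha_T}\,(z_{\mathrm{clean}}+\delta_m),\;(1-\bar\alpha_T)\,I\bigr),
\]
so all the conditional laws $P_{X_T\mid M=m}$ share the common covariance $(1-\bar\alpha_T)I$ and differ only through their means, each of which lies within distance $\sqrt{\bar\alpha_T}\,\rho$ of the watermark-free center $\sqrt{\bar\alpha_T}\,z_{\mathrm{clean}}$, by the additive-and-bounded model ($\|\delta_m\|_2\le\rho$).

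First I would use the variational characterization of mutual information as a minimum over reference measures: for any fixed $Q$ on $\mathbb{R}^d$, $I(M;X_T)\le \mathbb{E}_{M}\,D_{\mathrm{KL}}\!\bigl(P_{X_T\mid M}\,\big\|\,Q\bigr)$. I would take $Q=\mathcal{N}(\sqrt{\bar\alpha_T}\,z_{\mathrm{clean}},\,(1-\bar\alpha_T)I)$, the law of the forward-diffused clean latent. Then each summand is a KL divergence between two Gaussians with identical covariance, which has the closed form $\frac{\|\sqrt{\bar\alpha_T}\,\delta_m\|_2^2}{2(1-\bar\alpha_T)}=\frac{\bar\alpha_T\|\delta_m\|_2^2}{2(1-\bar\alpha_T)}\le \frac{\bar\alpha_T\rho^2}{2(1-\bar\alpha_T)}$. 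Averaging over $M$ preserves this bound, so $I(M;X_T)\le \frac{\bar\alpha_T\rho^2}{2(1-\bar\alpha_T)}$, and since $\bar\alpha_T\to 0$ forces $1-\bar\alpha_T\to 1$, the right-hand side tends to $0$, which is the claim. An alternative route that sidesteps the variational formula is to write $I(M;X_T)=\mathbb{E}_M D_{\mathrm{KL}}(P_{X_T\mid M}\|P_{X_T})$, bound the inner term by $\mathbb{E}_{M'}D_{\mathrm{KL}}(P_{X_T\mid M}\|P_{X_T\mid M'})$ using convexity of $q\mapsto D_{\mathrm{KL}}(p\|q)$, and then apply the same Gaussian computation with $\|\delta_m-\delta_{m'}\|_2\le 2\rho$; this costs only a factor of $4$ in the constant.

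The one point that needs care, and which I expect to be the main (though minor) obstacle, is the status of $z_{\mathrm{clean}}$: the theorem should be read with the clean latent either fixed or random but independent of the message $M$. In the random case, writing $Z=z_{\mathrm{clean}}$, I would apply the bound above conditionally on $Z$ (the constant $\frac{\bar\alpha_T\rho^2}{2(1-\bar\alpha_T)}$ does not depend on the realized value of $Z$), and then conclude $I(M;X_T)\le I(M;X_T\mid Z)$ using that $I(M;X_T,Z)=I(M;Z)+I(M;X_T\mid Z)=I(M;X_T\mid Z)$ when $M\perp Z$, together with $I(M;X_T)\le I(M;X_T,Z)$, which always holds. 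The remaining details are routine: $\bar\alpha_T\in(0,1)$ keeps all densities well defined so the KL is finite with no division by zero, and any stochasticity in the encoder is harmless because the argument only uses the pathwise bound $\|\delta_m\|_2\le\rho$.
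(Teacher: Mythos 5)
Your proof is correct, and it takes a genuinely different and in fact tighter route than the paper's. The paper's argument is qualitative: it invokes Proposition~\ref{prop:snr-decay} to argue that the law of $X_T$ converges in distribution to a Gaussian not depending on $M$, and then appeals to ``continuity of mutual information'' under that convergence to conclude $I(M;X_T)\to 0$ --- a step that is delicate, since mutual information is not continuous under weak convergence in general, which is why the paper only offers a proof sketch. You instead exploit the closed-form conditional law $X_T\mid\{M=m\}\sim\mathcal{N}\bigl(\sqrt{\bar\alpha_T}(z_{\mathrm{clean}}+\delta_m),(1-\bar\alpha_T)I\bigr)$ together with the variational bound $I(M;X_T)\le\mathbb{E}_M D_{\mathrm{KL}}(P_{X_T\mid M}\Vert Q)$ with $Q$ the diffused clean-latent Gaussian, yielding the explicit non-asymptotic estimate $I(M;X_T)\le\frac{\bar\alpha_T\rho^2}{2(1-\bar\alpha_T)}$, from which the limit is immediate. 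This buys you (i) a fully rigorous argument that avoids the continuity issue, (ii) a quantitative finite-$T$ rate directly usable in the ``finite-$T$ regime'' remark and in Theorem~\ref{thm:decode-failure} via Fano, and (iii) a clean treatment of a random clean latent and of encoder stochasticity via conditioning and convexity of KL, neither of which the paper addresses. The paper's sketch, for its part, is shorter and does not require writing down the reference measure, but your version is the one that would survive scrutiny as a complete proof.
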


\begin{proof}[Proof sketch]
	The process $(X_t)_{t=0}^T$ is generated by successive application of Gaussian kernels that progressively diminish the dependence on the initial condition $X_0 = z_0$. Proposition~\ref{prop:snr-decay} imples that the SNR of the watermark component $\delta_M$ decays to zero, so the law of $X_T$ converges (in distribution) to a Gaussian distribution that is independent of $M$ (because the residual dependence on $\delta_M$ vanishes). Since mutual information is continuous under such convergence and equals $0$ if one variable is independent of the other, we obtain $\lim_{T\to\infty} I(M; X_T) = 0$.
\end{proof}

Theorem~\ref{thm:info-erasure} states that, in the idealized forward diffusion limit, no detector (no matter how powerful) can reliably infer the watermark message from $x_T$. In practice, $T$ is finite, so some information may remain; nevertheless, Proposition~\ref{prop:snr-decay} shows that for standard diffusion schedules, $I(M; X_T)$ becomes very small for moderately large $T$.

\medskip
\noindent \textbf{2. Unguided Regeneration (Baseline):} 
As a baseline, we perform the reverse diffusion from $x_T$ without any special guidance. We condition the generation lightly on preserving the original image content (e.g., by providing $I_w$ itself as a low-weight reference or using an identity prompt). This yields an intermediate image $I'_0$ which is a regenerated version of $I_w$. Since diffusion models prioritize perceptual quality, $I'_0$ typically looks nearly identical to $I_w$ to the human eye. However, any hidden signal not strongly tied to visible content is likely to be lost. We will measure how much of the watermark can still be decoded from $I'_0$.

From a probabilistic perspective, the reverse diffusion step implements a learned approximation $p_\theta(x_0 \mid x_T, c)$, where $c$ represents optional conditioning such as a text prompt or low-weight reference image. It is well known that, under ideal training, diffusion models approximate the score function of the data distribution and the reverse process converges (in distribution) to samples from the true data distribution. For our purposes, the key property is that the reverse process marginalizes over the high-entropy noise in $x_T$, which has already suppressed the watermark signal.

\begin{assumption}[Content-dominated reverse model]
	\label{assump:content-dominated}
	Let $C$ denote high-level semantic content (e.g., object layout, style) and let $M$ be the watermark message. We assume:
	\begin{enumerate}
		\item The encoder $\mathcal{E}$ satisfies $I(M; C \mid I_{\mathrm{clean}}) \approx 0$, i.e., $M$ is independent of high-level content given the clean image.
		\item The diffusion model $M$ learns a reverse process such that, for sufficiently large $T$, the conditional distribution satisfies
		\[
		X'_0 \sim p_\theta(x_0 \mid X_T, c) \approx p_\theta(x_0 \mid C),
		\]
		i.e., the regenerated sample $X'_0$ depends on $X_T$ mainly through the content variable $C$, and is almost independent of the specific watermark perturbation.
	\end{enumerate}
\end{assumption}

Under this assumption, the attacker’s output $I'_0$ can be modeled as a random draw from the distribution of images with the same high-level content but no access to the small watermark perturbation.

\begin{lemma}[Information bottleneck through content]
	\label{lem:ib}
	Under Assumption~\ref{assump:content-dominated}, we have the Markov chain
	\[
	M \;\to\; (I_w, X_T) \;\to\; C \;\to\; I'_0,
	\]
	up to an approximation error that vanishes as $T$ grows and the reverse model becomes ideal. Consequently,
	\[
	I(M; I'_0) \;\le\; I(M; C).
	\]
	If additionally $I(M; C) \approx 0$, then $I(M; I'_0) \approx 0$.
\end{lemma}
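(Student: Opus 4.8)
The plan is to establish the displayed chain of random variables as a (possibly approximate) Markov chain and then invoke the data processing inequality; once the chain is in place, the bound $I(M; I'_0)\le I(M;C)$ and its corollary are immediate. I would assemble the chain $M \to (I_w, X_T) \to C \to I'_0$ from two conditional-independence claims.

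First, the segment $M \to (I_w, X_T) \to C$. By construction $C$ is the high-level content that the reverse model extracts from the noised latent, i.e.\ $C$ is a (possibly stochastic) function of $X_T$ together with the conditioning $c$ (which is itself derived from $I_w$); in particular $C$ is conditionally independent of $M$ given $(I_w, X_T)$, since the message enters the content variable only through the encoded image $I_w = \mathcal{E}(I_{\mathrm{clean}},m)$ and its forward-diffused version $X_T$. This segment is exact and uses nothing beyond the definition of the pipeline. Second, the segment $(I_w, X_T) \to C \to I'_0$ is precisely Assumption~\ref{assump:content-dominated}(2): the regenerated sample $I'_0 \sim p_\theta(x_0 \mid X_T, c)$ has conditional law equal to $p_\theta(x_0 \mid C)$ up to an error $\eta_T$ (measured in total variation) that, by hypothesis, tends to $0$ as $T$ grows and the reverse model approaches the ideal score model. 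In the idealized limit $\eta_T = 0$ this says exactly that $I'_0$ is conditionally independent of $(M, I_w, X_T)$ given $C$, so the two segments compose into a genuine Markov chain; marginalizing the middle node gives that $M \to C \to I'_0$ is Markov, and the data processing inequality yields $I(M; I'_0)\le I(M;C)$. For the corollary I would note that $I(M;C)\le I(M;I_{\mathrm{clean}}) + I(M;C\mid I_{\mathrm{clean}})$ by the chain rule; the first term vanishes when the payload is drawn independently of the cover image and the second is $\approx 0$ by Assumption~\ref{assump:content-dominated}(1), so $I(M;I'_0)\approx 0$.

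For the finite-$T$, non-ideal regime I would upgrade ``$\le$'' to ``$\le I(M;C) + \varepsilon(\eta_T)$'' with $\varepsilon(\eta_T)\to 0$: replacing the true reverse kernel $p_\theta(\cdot\mid X_T,c)$ by its content-only surrogate perturbs the joint law of $(M,I'_0)$ by at most $\eta_T$ in total variation, and mutual information is continuous under small total-variation perturbations on the (finite, or at least bounded-entropy) message alphabet. This can be made quantitative either through a Pinsker-type estimate combined with a bound on the entropy difference $|H(I'_0) - H(\tilde I'_0)|$ for the surrogate $\tilde I'_0 \sim p_\theta(\cdot\mid C)$, or by bounding $|I(M;I'_0) - I(M;\tilde I'_0)|$ directly in terms of the total-variation distance between the two joints.

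The main obstacle is exactly this last, quantitative step: the Markov-chain/DPI skeleton is routine, but controlling how the \emph{approximate} conditional independence in Assumption~\ref{assump:content-dominated}(2) (and, if one insists on full rigor, the approximate independence in part~(1)) propagates through the mutual information requires a continuity estimate for $I(\cdot;\cdot)$ and some care about which space the message lives in. In the purely idealized diffusion limit these errors vanish and the statement collapses to the clean chain $M\to C\to I'_0$ together with the data processing inequality — the form I would present first, and then refine.
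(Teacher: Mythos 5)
Your proposal follows essentially the same route as the paper: you assemble the Markov chain $M \to (I_w, X_T) \to C \to I'_0$ from the same two conditional-independence observations (the content variable depends on $M$ only through the pipeline, and Assumption~\ref{assump:content-dominated}(2) gives $I'_0 \perp (M, I_w, X_T) \mid C$ in the ideal limit) and then apply the data-processing inequality, with the corollary handled via Assumption~\ref{assump:content-dominated}(1). Your finite-$T$ total-variation continuity refinement and the chain-rule decomposition $I(M;C)\le I(M;I_{\mathrm{clean}})+I(M;C\mid I_{\mathrm{clean}})$ are in fact more explicit than the paper's own proof, which leaves the approximation error qualitative.
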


\begin{proof}
	By construction, $I_w$ and hence $X_T$ are functions of $(I_{\mathrm{clean}},M)$; the content $C$ is a (possibly stochastic) functional of $I_{\mathrm{clean}}$ (and thus of $I_w$) but does not depend directly on $M$ beyond its effect through $I_{\mathrm{clean}}$ and the small perturbation $\delta_M$. Assumption~\ref{assump:content-dominated} states that $I'_0$ depends on $(X_T,c)$ only through $C$. Thus we obtain the Markov chain
	\[
	(M, I_{\mathrm{clean}}) \to (I_w, X_T) \to C \to I'_0.
	\]
	Standard properties of Markov chains imply the data-processing inequality
	\[
	I(M; I'_0) \le I(M; C).
	\]
	If the watermark is designed so that $M$ is statistically independent of $C$ (or nearly so), then $I(M; C)\approx 0$ and therefore $I(M; I'_0) \approx 0$.
\end{proof}

Lemma~\ref{lem:ib} says that, under a mild assumption on the learned reverse process, the regenerated image cannot carry more infrmation about the watermark than is already encoded in the high-level content. Since practical watermarks are designed not to alter semantics, $I(M; C)$ is negligible, so the regenerated image is (information-theoretically) almost watermark-free.

\begin{theorem}[Asymptotic failure of watermark decoding after diffusion attack]
	\label{thm:decode-failure}
	Let $\mathcal{D} : \mathcal{I} \to \mathcal{M}$ be any (possibly randomized) watermark decoder, and define its success probability after the diffusion attack as
	\[
	P_{\mathrm{succ}}(T) = \mathbb{P}\bigl[\mathcal{D}(I'_0) = M\bigr],
	\]
	where $I'_0$ is obtained from $I_w$ via $T$ forward diffusion steps followed by unguided regeneration. Suppose that:
	\begin{enumerate}
		\item The forward diffusion schedule satisfies $\bar{\alpha}_T \to 0$ as $T \to \infty$.
		\item Assumption~\ref{assump:content-dominated} holds.
		\item The a priori distribution of $M$ is uniform over a finite set $\mathcal{M}$.
	\end{enumerate}
	Then
	\[
	\lim_{T\to\infty} P_{\mathrm{succ}}(T) = \frac{1}{|\mathcal{M}|},
	\]
	i.e., in the limit of a very strong diffusion attack, the best decoder does no better than random guessing.
\end{theorem}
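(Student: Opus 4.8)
The plan is to reduce the statement to the information-theoretic estimate $I(M;I'_0)\to 0$ and then convert that estimate into a bound on decoding error via Fano's inequality. For the first part I would chain the results already established. Hypothesis~1 ($\bar\alpha_T\to 0$) together with Theorem~\ref{thm:info-erasure} gives $I(M;X_T)\to 0$; feeding this through the (stochastic) reverse diffusion and invoking Assumption~\ref{assump:content-dominated} exactly as in Lemma~\ref{lem:ib} --- the Markov chain $M\to(I_w,X_T)\to C\to I'_0$ and $I(M;C)\approx 0$, the latter following from part~1 of the assumption once one also uses that $M$ is drawn independently of $I_{\mathrm{clean}}$ --- yields $I(M;I'_0)\to 0$. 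I would make the ``$\approx$'' precise by writing $I(M;I'_0)\le I(M;C)+\varepsilon_T$ with $\varepsilon_T\to 0$ absorbing the finite-$T$ and non-ideal-score slack, so that $\limsup_{T} I(M;I'_0)\le I(M;C)=0$.

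With that in place, fix an arbitrary (possibly randomized) decoder and set $\hat M=\mathcal D(I'_0)$; for a randomized decoder one adjoins its internal coins $R\perp(M,I'_0)$, which leaves $H(M\mid I'_0)$ unchanged, so randomization cannot help. Write $P_e(T)=1-P_{\mathrm{succ}}(T)=\mathbb P[\hat M\neq M]$. Fano's inequality gives
\[
H(M\mid I'_0)\;\le\; H_b\!\bigl(P_e(T)\bigr) + P_e(T)\,\log(|\mathcal M|-1)\;=:\;g\bigl(P_e(T)\bigr),
\]
where $H_b$ is the binary entropy and $g$ is continuous and strictly concave on $[0,1]$ with unique maximizer $p^\star=1-1/|\mathcal M|$ and $g(p^\star)=\log|\mathcal M|$. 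By hypothesis~3, $H(M)=\log|\mathcal M|$, so $H(M\mid I'_0)=\log|\mathcal M|-I(M;I'_0)\to\log|\mathcal M|=g(p^\star)$ by the first part. Since $g(P_e(T))$ is squeezed up to the maximum of a strictly concave function with a unique maximizer, $P_e(T)\to p^\star$, hence $P_{\mathrm{succ}}(T)\to 1-p^\star=1/|\mathcal M|$, which is the claim; note this simultaneously delivers the matching lower bound $P_{\mathrm{succ}}(T)\gtrsim 1/|\mathcal M|$ without appeal to any particular trivial decoder. An equivalent route replaces Fano by Pinsker: $I(M;I'_0)\to 0$ forces $\mathrm{TV}\bigl(P_{M,I'_0},\,P_M\otimes P_{I'_0}\bigr)\le\sqrt{I(M;I'_0)/2}\to 0$, and under the product law $\mathbb P[\mathcal D(I'_0)=M]=\tfrac{1}{|\mathcal M|}\sum_m P_{I'_0}\!\bigl(\mathcal D(I'_0)=m\bigr)=1/|\mathcal M|$ by uniformity of $M$, so $P_{\mathrm{succ}}(T)=1/|\mathcal M|+o(1)$.

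The delicate step is the first one. Lemma~\ref{lem:ib} is stated only ``up to an approximation error that vanishes,'' and since mutual information is in general only lower semicontinuous, a naive continuity argument cannot give an \emph{upper} bound on $I(M;I'_0)$; one must verify that the residual dependence of $I'_0$ on $\delta_M$ really contributes $o(1)$, not a fixed positive constant. Two channels feed that residual term: (i) the imperfectly trained score network and finite number of steps in the reverse pass, and (ii) any conditioning $c$ used during regeneration --- if $c$ is a low-weight reference derived from $I_w$ itself it could in principle re-inject watermark information, and here one genuinely needs part~2 of Assumption~\ref{assump:content-dominated}, which forces $I'_0$ to depend on $(X_T,c)$ only through the content $C$. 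When instead $c$ carries no side information about $M$ (e.g.\ a plain text prompt describing the scene), the argument simplifies to something fully rigorous: $I'_0$ is then a stochastic function of $X_T$ alone, the data-processing inequality gives $I(M;I'_0)\le I(M;X_T)$ unconditionally, and Proposition~\ref{prop:snr-decay} together with the Gaussian-channel bound $I(M;X_T)\le \tfrac{\bar\alpha_T\rho^2}{2(1-\bar\alpha_T)}$ (using $\|\delta_M\|_2\le\rho$ and $M\perp z_{\mathrm{clean}}$) makes the right-hand side an explicit $o(1)$ quantity as $\bar\alpha_T\to 0$; the Fano/squeeze steps then go through verbatim. I would present this special case as the rigorous core, keep the content-bottleneck version as the general-but-assumption-laden statement, and flag explicitly that the role of Assumption~\ref{assump:content-dominated} is precisely to control the side-information channels (i) and (ii).
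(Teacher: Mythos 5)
Your argument follows essentially the same route as the paper's proof sketch: chain Theorem~\ref{thm:info-erasure}, Assumption~\ref{assump:content-dominated}, and Lemma~\ref{lem:ib} to conclude $I(M;I'_0)\to 0$, then use Fano's inequality together with uniformity of $M$ to force $P_e \to 1-1/|\mathcal{M}|$ and hence $P_{\mathrm{succ}}(T)\to 1/|\mathcal{M}|$. Your refinements --- the strict-concavity/unique-maximizer squeeze that justifies the final step, the explicit handling of randomized decoders, the Pinsker alternative, and the rigorous DPI-plus-Gaussian-channel special case controlling the ``$\approx$'' in Lemma~\ref{lem:ib} --- fill in details the paper's sketch leaves implicit, but do not constitute a different approach.
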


\begin{proof}[Proof sketch]
	By Theorem~\ref{thm:info-erasure}, the forward diffusion makes $X_T$ asymptotically independent of $M$. Assumption~\ref{assump:content-dominated} plus Lemma~\ref{lem:ib} then imply that $I(M; I'_0)$ tends to $0$ as $T\to\infty$. For any estimator $\hat{M} = \mathcal{D}(I'_0)$, Fano’s inequality yields
	\[
	H(M \mid I'_0) \le h(P_e) + P_e \log(|\mathcal{M}|-1),
	\]
	where $P_e = \mathbb{P}(\hat{M}\neq M)$ and $h$ is the binary entropy function. Since $I(M; I'_0) = H(M) - H(M \mid I'_0) \to 0$ and $H(M) = \log|\mathcal{M}|$ (by uniformity), we must have $H(M \mid I'_0) \to \log|\mathcal{M}|$, which forces $P_e \to 1 - \frac{1}{|\mathcal{M}|}$ and hence $P_{\mathrm{succ}}(T) \to \frac{1}{|\mathcal{M}|}$.
\end{proof}

\begin{remark}[Finite-$T$ regime]
	Theorem~\ref{thm:decode-failure} concerns the asymptotic limit $T\to\infty$. In practice, $T$ is finite and the reverse diffusion is imperfect, so $I(M; I'_0)$ is small but nonzero. Our empirical evaluation measures $P_{\mathrm{succ}}(T)$ for realistic $T$ and trained models, showing that even moderate $T$ suffices to drive the detector accuracy close to random guessing, in line with the theoretical trend predicted above.
\end{remark}

\noindent \textbf{3. Guided Diffusion Attack (Ours):} We enhance the attack by incorporating the watermark decoder's feedback into the generation. Let $\mathcal{L}_{wm}(I) = -\text{confidence}_{D}(I)$ be a differentiable loss that increases as the watermark detection confidence in image $I$ decreases (for multi-bit payloads, one can use the negative average bit probability of the correct message, or any differentiable proxy for extraction error). Given the decoder $D$, we can compute $\nabla_{x} \mathcal{L}_{wm}(x)$, the gradient of the watermark loss w.rt. the image pixels (or latent). We integrate this into each denoising step of the diffusion model. Specifically, at each reverse step $t \rightarrow t-1$, the diffusion model produces a predicted $x_{t-1}$ (the denoised image at the previous timestep). We then adjust this prediction by ascending the gradient of the watermark loss:
\[ x_{t-1} \leftarrow x_{t-1} - \gamma \, \nabla_{x_{t-1}} \mathcal{L}_{wm}(x_{t-1}), \] 
where $\gamma$ is a guidance coefficient controlling the strength of watermark removal. Intuitively, this step nudges the generated sample in a direction that reduces the presence of the embedded watermark. We apply this guied update at each diffusion step (or at a subset of the final steps for efficiency) to steer the generation away from the watermark-bearing solution. The process terminates at $x_0^{adv}$, which is decoded to the final output image $I'$. Pseudocode for the guided attack is provided in Algorithm~\ref{alg:guided_attack}.

\begin{algorithm}[t]
	\caption{Guided Diffusion Watermark Removal Attack}
	\label{alg:guided_attack}
	\begin{algorithmic}[1]
		\REQUIRE Watermarked image $I_w$, diffusion model $M$ (with encoder $Enc$ and decoder $Dec$ for latent space), watermark decoder $D$, diffusion steps $T$, guidance strength $\gamma$.
		\STATE $z_0 \leftarrow Enc(I_w)$ \hfill // Encode image to latent
		\STATE Sample $x_T \sim q(x_T | x_0 = z_0)$ \hfill // Forward diffuse to noise
		\FOR{$t = T$ down to $1$}
		\STATE $x_{t-1} \leftarrow M.\textit{predict}(x_t, cond=I_w)$ \hfill // Denoise step (with slight conditioning to preserve content)
		\STATE $\nabla \leftarrow \nabla_{x_{t-1}} \mathcal{L}_{wm}(Dec(x_{t-1}))$ \hfill // Gradient of watermark loss
		\STATE $x_{t-1} \leftarrow x_{t-1} - \gamma \, \nabla$ \hfill // Adjust towards lower watermark signal
		\STATE $x_t \leftarrow x_{t-1}$ \hfill // Set up for next iteration
		\ENDFOR
		\STATE $I' \leftarrow Dec(x_0)$ \hfill // Decode the final latent to image
		\STATE \textbf{return}  $I'$ \hfill // Output image with watermark removed
	\end{algorithmic}
\end{algorithm}

In Algorithm~\ref{alg:guided_attack}, the function $M.\textit{predict}(x_t, cond=I_w)$ denotes one step of the diffusion model's denoising process possibly conditioned on the original image $I_w$ (or its text description) to maintain fidelity. The gradient $\nabla \mathcal{L}_{wm}$ is computed via backpropagation through the decoder $D$ (which for learned watermarks is typically a neural network). This algorithm effectively performs an adversarial attack on the watermark: it uses the diffusion model as a generator and the watermark decoder as a differentiable discriminator to iteratively remove the hidden signal.

\subsection{Evaluation Metrics}
To quantify the impact of diffusion editing on watermarks, we employ the following metrics in our experiments. For watermark \textbf{effectiveness}, we measure the message \emph{bit accuracy} or decoding success rate. Specifically, for $N$ images each with a watermark payload, we calculate the percentage of payload bits correctly recovered by $D$ after the attack (or, if the scheme is zero-bit watermarking with just a yes/no detection, we report the detection rate). A robust watermark should have close to 100\% bit accuracy on unedited images and high accuracy after mild distortions; our interest is how this drops under diffusion-based attacks. For \textbf{imperceptibility}, we use PSNR and SSIM between the watermarked image $I_w$ and original $I$ to ensure the embedding distortion is minimal (all the tested methods were configured to produce PSNR $>$ 40~dB). For the \textbf{output image quality} after attack, we report the PSNR/SSIM between $I'$ and the original $I$ (or $I_w$). High values indicate the diffusion attack did not significantly alter visible content. We also verify by visual inspection that $I'$ remains a plausible, artifact-free image. In some cases, we compute the \textbf{payload embedding capacity} (bits per image) for each scheme, but we keep this consistent (around 30-100 bits) across methods to ensure fairness in robustness comparison.

\section{Experimental Setup}
\noindent \textbf{Watermarking Methods and Implementation.} We evaluate three watermarking systems: \textit{StegaStamp} \cite{Tancik2020}, \textit{TrustMark} \cite{Bui2025}, and \textit{VINE} \cite{Lu2025}. For StegaStamp, we use the publicly available implementation by Tancik \emph{et al.}, encoding 100-bit messages. TrustMark and VINE require training models; we obtained pretrained weights from the authors for consistency. TrustMark was configured with its default payload (around 32 bits) and uses a public decoder (no secret key). VINE's encoder was trained on the COCO dataset with a 30-bit payload and robust augmentation (per \cite{Lu2025}). All methods produce color images of size $256\times256$ (for higher resolutions, TrustMark's resolution-agnostic approach was employed \cite{Bui2025}, but for simplicity we report on 256$^2$ images). Each watermarked image is verified to have nearly invisible embedding (PSNR 40-45 dB and no perceptible artifacts).

\noindent \textbf{Dataset.} We select a diverse set of 500 images from the COCO validation set and 200 images from the Imagenet test set to serve as original images $I$. These images cover a variety of scenes and objects, ensuring that watermark performance is not biased to a particular content. Each image is embedded with a random payload (unique per image) using each watermarking method, yielding three watermarked sets (for StegaStamp, TrustMark, VINE respectively). We confirmed that baseline decoding success on these unedited watermarked images is 100\% for all three methods (i.e., the payload can be fully recovered in absence of attacks).

\noindent \textbf{Diffusion Model for Editing.} For the diffusion-based attacks, we use the Stable Diffusion 1.5 model (open-source latent diffusion model) as our base generative model $M$. We operate in the latent space of SD (which has a VAE encoding of size $64\times64\times4$ for 256$^2$ images). We consider two editing approaches: (a) \textit{Image regeneration without guidance}: we set the textual prompt to a neutral description of the image (or simply an empty prompt) and run the image-to-image pipeline with a strength of 1.0 (meaning the original image is used only as a source of noise, and the model fully regenerates content). This corresponds to the \emph{unguided diffusion attack} scenario. (b) \textit{Guided watermark reoval (our Algorithm~\ref{alg:guided_attack})}: we integrate the watermark decoder's gradient as described earlier. For this, we need the watermark decoder to be differentiable. StegaStamp and VINE use neural decoders which we have access to and can backprop through. TrustMark's decoder is also a neural network (since it uses a trained model for detection), which we similarly use for guidance. We set the guidance strength $\gamma$ via a preliminary sweep such that the watermark bit accuracy drops to near chance while the output image remains high quality; typically $\gamma \approx 0.1$ for our normalization. We run $T=50$ diffusion steps for regeneration (which is sufficient for stable diffusion at this resolution to converge). The guided attack adds about 10-20\% overhead due to computing gradients.

\noindent \textbf{Comparison with Traditional Attacks.} For context, we also evaluate classical attack resilience: we subjected the watermarked images to JPEG compression (quality 50), Gaussian blur ($5\times5$ kernel), and a crop-resize (10\% crop inward) and checked watermark decoding. All three methods (StegaStamp, TrustMark, VINE) retained $>$90\% decoding under those attacks, with VINE typically performing best (often 99\% bits correct after such attacks, consistent with \cite{Lu2025}). This confirms that our implementations match expected robustness on conventional distortions. The diffusion-based attacks we apply are much more severe in comparison.

\section{Results}
\subsection{Watermark Survival After Diffusion Editing}
We first report the main quantitative results: the success rate of watermark decoding after diffusion-based editing, for each method. Table~\ref{tab:diff_attack} summarizes the outcomes under two scenarios: unguided diffusion regeneration, and our guided diffusion attack. For each watermarking method, we list the percentage of images (out of 500) for which the correct payload was recovered, as well as the average bit accuracy.

\begin{table*}[h]
	\centering
	\begin{tabular}{l|c c|c c}
		\hline 
		& \multicolumn{2}{c|}{\textbf{Unguided Regeneration}} & \multicolumn{2}{c}{\textbf{Guided Attack (Ours)}} \\
		\textbf{Watermark} & Decode Succ. & Bit Acc. & Decode Succ. & Bit Acc. \\
		\hline
		StegaStamp \cite{Tancik2020} & 8\% & 55.1\% & 0\% & 50.2\% \\
		TrustMark \cite{Bui2025} & 12\% & 57.3\% & 0\% & 50.0\% \\
		VINE \cite{Lu2025} & 20\% & 60.4\% & 0\% & 49.8\% \\
		\hline
	\end{tabular}
	\caption{Watermark decoding performance after diffusion-based editing attacks. \emph{Decode Succ.} is the percentage of test images for which the full embedded message was recovered without error. \emph{Bit Acc.} is the average percentage of bits correctly decoded (50\% indicates random guessing level for a binary message). Unguided regeneration uses a diffusion model to resynthesize the image without explicit anti-watermark guidance. Guided attack incorporates our watermark-targeted diffusion method. All methods show a drastic drop in performance, with guided attack reducing decoding to chance level (bit accuracy $\approx 50\%$) for all watermarks.}
	\label{tab:diff_attack}
\end{table*}

As seen in Table~\ref{tab:diff_attack}, diffusion regeneration alone (unguided) severely impairs the watermarks. StegaStamp retains the payload fully intact in only 8\% of images; most images fail to decode correctly (though the bit accuracy of ~55\% indicates the decoder still picks up some weak signal above random noise level). TrustMark and VINE show slightly better retention under unguided regeneration (12\% and 20\% success respectively), which may be attributed to their enhanced robustness training. Notably, VINE, which is explicitly designed to withstand image editing, does outperform the others here, but the majority (80\%) of its watermarked images still lose fidelity to the point of decode failure after one pass through the diffusion model. This underscores that while techniques like surrogate blurring during training help, they are not fully sufficient against a generative model's re-synthesis.

Under the \textbf{guided attack}, the results are even more striking: none of the images from any method yield a correct message after decoding. The bit accuracy for all three methods drops to approximately 50\%, which is essentially the expected value for random guesses (for a uniformly random payload, a completely broken decoder will output random bits, 50\% of which by chance match the original). In other words, our guided diffusion method achieves effectively a 0\% watermark recovery rate across the board. This holds true even for VINE, despite its innovations to counter generative edits. In fact, the guided attack was able to explicitly drive the watermark decoders into misclassifying by optimizing the image content against them. We emphasize that the visual quality of the images after attack remained high; Figure~\ref{fig:examples} (in the appendix) shows example images before and after the attack, where differences are nearly imperceptible to humans, yet the watermark has been wiped out.

It is important to note that the guided attack assumes the attacker can run backpropagation on the decoder. If such access is restricted (say the decoder is secret), the attacker might resort to other means like training a surrogate model or using evolutionary strategies to remove the watermark. However, even the unguided results demonstrate that one can significantly degrade the watermark without any knowledge, simply by leveraging the diffusion model's tendency to generate clean images.

\subsection{Analysis of Distortion and Fidelity}
A crucial aspect of watermark removal attacks is that they should not excessively degrade the image quality; otherwise, the attack is trivial (one could just blank out the image to remove a watermark, but that defeats the purpose). In our experiments, we monitored the perceptual quality of output images $I'$ produced by the diffusion editing. For unguided regeneration, the output images are typically indistinguishable from the input $I_w$ by eye. We measured an average PSNR of 29.7 dB between $I'$ and $I_w$ for StegaStamp images (which have slight differences in fine texture), and around 30--32 dB for TrustMark and VINE images. These PSNR values, while not extremely high (owing to small pixel-level differences introduced by generative sampling), correspond to changes that are minor in terms of human perception (SSIM values were $>0.95$ in all cases). The guided attack introduces a bit more distortion in some cases if $\gamma$ is large, since aggressively removing the watermark can conflict slightly with preserving content. Nonetheless, we still observed an average PSNR of 28.5 dB on the guided outputs and no significant new artifacts. Importantly, the difference between unguided and guided outputs is also subtle, indicating that most of the watermark signal was removed by changes that do not register as obvious alterations of the image.

We also analyzed where the diffusion model tends to alter the image most when removing the watermark. We found that high-frequency regions (such as textures, foliage, or fur) and flat areas with subtle color gradients are typically the places where invisible watermarks reside and thus where changes occur. The diffusion model, when guided, might add slight extra texture noise or smooth out some micro-contrast in these areas to eliminate the encoded pattern. These changes are barely discernible without a side-by-side comparison. Conversely, salient edges and structures remain virtually identical, since altering those would degrade perceptual quality and likely be counterproductive to the generative model's objective (and would risk creating detection artifacts of editing).

\subsection{Theoretical Insights}
Our empirical findings are supported by a theoretical perspective. Here we present a brief analysis explaining why diffusion-based transformations eliminate watermark information. Consider the mutual information $I(m; I')$ between the embedded message $m$ and the output image $I'$ after a diffusion regeneration. Even without an explicit guided loss, as the number of diffusion steps $T \to \infty$, the process of adding noise and denoising effectively generates $I'$ from a distribution that increasingly depends on the generative prior and less on the precise signal in $I_w$. In the limit of an infinite diffusion (and an ideal model with no conditioning on $I_w$), $I'$ becomes an independent sample from the model's distribution (conditioned on high-level content at most, not on the exact watermark pattern). We can formalize this:

\noindent \textbf{Theorem 1.} \textit{Let $m$ be a random message embedded in image $I_w = E(I, m)$. Let $I'$ be the output of an ideal diffusion regeneration process applied to $I_w$ (with no watermark-specific conditioning). As the diffusion process length $T \to \infty$, the mutual information between $m$ and $I'$ approaches zero: $\lim_{T\to\infty} I(m; I') = 0$.}

\noindent \textit{Sketch of Proof.} The diffusion process can be viewed as a Markov chain $I_w = x_0 \to x_1 \to \dots \to x_T$ where $x_T$ is approximately independent of $x_0$ for large $T$ (due to intensive noising). The watermark message $m$ influences $I'$ only via $x_0$. We have $I(m; I') \le I(m; x_T)$ (by the data processing inequality, since $I' = f(x_T)$ for some generative function $f$). But $I(m; x_T)$ tends to zero because $x_T$ is essentially random noise independent of $m$ when $T$ is large. More formally, one can show that the KL divergence between the distribution of $x_T$ given different $m$ values goes to 0 as $T$ increases, implying the distributions become indistinguishable and thus $m$ cannot be inferred (mutual information vanishes). In practice, with finite $T$ and partial conditioning to preserve content, some small dependency might remain, but the guided diffusion attack essentially pushes any remaining dependency to zero by actively decorrelating $I'$ from $m$. $\square$

This result aligns with the intuition that a diffusion model acts as a powerful \emph{noise shaper} and generative filter: it can reove signals that are not part of its learned image manifold. A hidden watermark is typically an off-manifold addition (especially for neural watermarks that exploit slight pixel perturbations). Therefore, diffusion models naturally wash out those signals as $T$ grows. Additionally, our guided attack ensures that even for moderate $T$, the dependency is broken by force, as evidenced by the bit accuracy dropping to random (which is exactly what $I(m; I')=0$ would imply: the output carries no information about $m$ beyond random chance).

Another theoretical consideration is the trade-off between watermark robustness and imperceptibility. One might think that making a watermark more robust would require embedding it more strongly (which could make it more visible), but methods like VINE attempt to achieve robustness through intelligent means (frequency domain alignment, leveraging generative features) rather than brute-force signal injection. However, if a watermark were made so strong that it significantly altered the image, a diffusion model might detect that anomaly and potentially still remove it (or the image quality would degrade in the process, defeating the watermark's imperceptibility goal). Thus, the diffusion attack exposes a fundamental limit: a watermark that is both invisible and decodable may always be vulnerable to a model that \emph{re-generates} the image content, because the hidden signal is not part of the “essence” of the image.

\section{Discussion}
\subsection{Why Diffusion Models Break Watermarks}
Our study demonstrates that diffusion-based editing poses a qualitatively different challenge to watermarking than traditional distortions. Diffusion models operate with a high-level understanding of images, meaning they tend to preserve semantics (what is depicted) but not necessarily the exact pixel-level instantiation. A robust watermark often functions by subtly encoding bits in spatial or spectral patterns spread across the image. These patterns are generally uncorrelated with the semantic content (by design, to avoid being visible or altering the content). For example, a watermark might modulate certain high-frequency coefficients or slightly adjust pixel intensities in a pseudorandom arrangement. Such perturbations survive when the image undergoes small affine transforms or noise, because those transforms do not specifically target the watermark pattern. However, a diffusion model essentially \emph{re-envisions} the image: it sees an image of, say, a cat, and produces a new sample that looks like a cat, potentially with the same pose and background if guided, but it has no incentive to reproduce the exact pixel-level noise that was the watermark. In fact, any high-frequency noise that does not resemble natural texture is likely treated as noise to be removed during denoising. This is analogous to how denoising destroys steganographic signals: diffusion is a learned denoising that also introduces new content.

Our guided attack further emphasizes this by actively pushing the image out of the decoder's decision boundary. One might wonder if the guided approach could be countered by making the watermark decoder non-differentiable or secret. While secrecy can add security (Kerckhoffs’s principle aside), in many applications like provenance (TrustMark) the decoder is intentionally public. Non-differentiability (e.g., a look-up table decoder) might thwart gradient-based removal, but even then an unguided diffusion pass already deals a heavy blow. Moreover, one could use alternative optimization (like NES or differential evolution) to approximate gradients in a black-box setting. Thus, the fundamental issue remains: generative models can perform large-scale, intelligent perturbations that are very effective at clearing hidden signals.

\subsection{Towards Resilient Watermarks Against Generative AI}
Our findings call for new directions in the design of watermarking systems that can withstand generative model attacks. One idea is to incorporate knowledge of the generative model into the watermarking process. VINE made progress in this direction by using diffusion priors and simulating blur (a proxy for diffusion effects) during training \cite{Lu2025}. Future work could directly train watermark encoders with diffusion model-in-the-loop augmentation: for example, by applying a few cycles of diffusion-based editing as part of the training distortions. The challenge is the high computational cost and the difficulty of differentiating through an entire diffusion process. Another strategy could be to embed watermarks in a more \emph{semantic} manner. Instead of purely noise-like signals, the watermark could be tied to semantic features that the diffusion model is likely to preserve. For instance, slight alterations in the texture of certain objects or learning a watermark that mimics a naturally occurring high-frequency pattern (like a particular grain or dithering that the model might regenerate) could be promising. This is akin to making the watermark a part of the image's "concept." If successful, a diffusion model trying to regenerate the image might inadvertently reproduce the watermark because it perceives it as a legitimate aspect of the scene. Realizing this is non-trivial, however, as it might conflict with invisibility.

Another possible avenue is leveraging model watermarking techniques. For example, some research has looked at watermarking the generative models themselves (so that any output from them is marked) – but here we are concrned with marking images such that even other models can't remove it, which is a different scenario. One could imagine a synergy: if images and models are co-designed such that models recognize and maintain certain watermarks (perhaps as part of an ethical framework), that could help. However, relying on model cooperation cannot be guaranteed if attackers train their own models.

Multi-modal or cross-media watermarks might be more robust: e.g., encode information in the image that is entangled with metadata or an associated text description. Generative models might strip the image signal but if an external system knows the image ID or can match it via perceptual hash to a database, provenance could be retrieved. This shifts the problem to a different domain (content authentication via external means), which might be a more reliable approach in adversarial generative settings.

\subsection{Ethical Implications}
The ability to remove robust watermarks using AI raises important ethical considerations. Invisible watermarks are often proposed as a solution for tagging AI-generated content (to distinguish it from real content) and for protecting intellectual property (by tracing images). If diffusion models allow malicious actors to easily erase these tags, it undermines efforts to maintain transparency in media. For example, artists or photographers might embed watermarks to assert ownership of their digital art; a third party could use a diffusion model to strip those watermarks and then claim the art as their own or feed it into other models without attribution. This points to a cat-and-mouse dynamic: as watermarking improves, AI removal might also improve.

On the flip side, one must consider that not all watermark removal is malicious. The TrustMark paper \cite{Bui2025} itself introduced a removal (ReMark) primarily so that when an image is edited legitimately, a new watermark can be applied without layering multiple signals. Our guided diffusion removal could be used constructively in such a pipeline (with user consent) to clear an old watermark before re-embedding new information. However, doing so requires caution: any system that automates watermark removal should ensure it is used in authorized contexts. This could be managed by secure access controls or by designing watermarks that require some secret to remove (though as we have shown, relying on secrecy alone is brittle if a model can just learn to ignore the watermark).

From a policy perspective, the fact that generative AI can nullify watermarks means that regulatory approaches relying solely on watermarking AI content (such as a mandate that all AI images be watermarked) might not be sufficient. Additional forensic techniques or complementary methods (like content credentials tied to cryptography, e.g., the C2PA standard) may be needed to truly ensure provenance. Our work serves as a cautionary tale that technological measures for provenance can be subverted by equally powerful technologies if not continually updated.

We encourage the community to consider the broader impact: robust watermarks have been a tool for digital rights management and truth maintenance online, and if they are rendered ineffective, we must adapt quickly. This includes not only technical adaptations but ethical guidelines for AI developers. For instance, generative model creators might include features to detect common watermarks and preserve them (or at least warn if content with a known watermark is being modified). There is a parallel here to concept erasure: just as we want models to not reproduce certain harmful content, perhaps we also want them \emph{not to remove} certain identifiers intentionally. Embedding ethical constraints into AI systems—such as respecting watermarks—could be a valuable guideline.

\subsection{Future Work}
Our research opens several avenues for future exploration. One is developing \textbf{watermarking methods that survive diffusion}: as discussed, either through semantic embedding or adversarially trained robustness. Another is exploring \textbf{detection of AI-tampered images}; even if the watermark is gone, the act of diffusion editing might leave other detectable traces. Possibly, an external forensic classifier could tell that an image has been regenerated by a model (diffusion processes might introduce subtle noise patterns or GAN-like fingerprints). If such detectors can be built, they could at least alert that an image’s watermark was likely removed by AI, which itself is useful provenance information (e.g., "this image was modified by an AI, so treat its authenticity with caution").

Additionally, extending our study to other domains (audio, video) could be important, as diffusion models are now present in those areas too. Video diffusion models might remove watermarks from video frames; however, consistency constraints in video might either help or hinder watermark survival depending on the approach.

Finally, \textbf{countermeasures at the model level} could be investigated: can we design diffusion models that inherently maintain waermarks unless specifically authorized to remove them? This could involve training diffusion models with a bias to reconstruct any faint embedded signal. Implementing this would be challenging and may conflict with model quality, but it's an intriguing notion for responsible AI design.

\section{Conclusion}
We have conducted an in-depth study on how diffusion-based image editing affects robust image watermarking systems. Our experiments, spanning three leading watermarking techniques (StegaStamp, TrustMark, and VINE), demonstrate that even the most resilient watermarks today can be effectively destroyed by modern diffusion models. We introduced a guided diffusion attack that uses the watermark's own decoder to iteratively purge the hidden signal, achieving nearly 0\% recovery of the payload while keeping the image visually unchanged. Alongside empirical results, we provided theoretical justification for the loss of information under diffusion transformations, linking the problem to concept erasure in generative models. 

The implications of our findings are significant: they reveal a current weakness in the ecosystem of AI-generated content authentiction. As generative models become more prevalent, so must our strategies for watermarking evolve. This work calls for the community to rethink robust watermark design in light of generative AI's capabilities. We highlighted potential directions, such as embedding watermarks in ways that entwine with an image's semantic content or training encoders with model-based augmentations. Furthermore, we discussed how ethical guidelines and perhaps cooperative model behaviors might be necessary to ensure watermarking remains a viable tool for digital provenance.

In summary, diffusion-based image editing poses a formidable challenge to robust watermarking. By understanding this threat and proactively developing new defenses, we can strive to maintain the integrity and trustworthiness of digital imagery in an era where creation and manipulation are increasingly driven by powerful AI models.

	\bibliography{example_paper}
	\bibliographystyle{icml2025}

	\clearpage
	\appendix
	
	\section{Additional Backgrounds}
With the advancement of deep learning~\cite{
	% xiangfei qiu
	qiu2024tfb,qiu2025duet,qiu2025DBLoss,qiu2025dag,qiu2025tab,wu2025k2vae,liu2025rethinking,qiu2025comprehensive,wu2024catch,
	% sifan zhou
	gsq,yu2025mquant,zhou2024lidarptq,pillarhist,
	% zequn xie
	xie2025chatdriventextgenerationinteraction,
	xie2025chat,
	% yunpeng gong
	1,2,3,4,5,6,7,8,
	% yanru sun
	sun2025ppgf,sun2024tfps,sun2025hierarchical,sun2022accurate,sun2021solar,niulangtime,sun2025adapting,kudratpatch,
	% zixu li
	ENCODER,FineCIR,OFFSET,HUD,PAIR,MEDIAN,
	% xinlei yu
	yu2025visual,
	% yaozong zheng
	zheng2025towards,zheng2024odtrack,zheng2025decoupled,zheng2023toward,zheng2022leveraging,
	% long peng
	li2023ntire,ren2024ninth,wang2025ntire,peng2020cumulative,wang2023decoupling,peng2024lightweight,peng2024towards,wang2023brightness,peng2021ensemble,ren2024ultrapixel,yan2025textual,peng2024efficient,conde2024real,peng2025directing,peng2025pixel,peng2025boosting,he2024latent,di2025qmambabsr,peng2024unveiling,he2024dual,he2024multi,pan2025enhance,wu2025dropout,jiang2024dalpsr,ignatov2025rgb,du2024fc3dnet,jin2024mipi,sun2024beyond,qi2025data,feng2025pmq,xia2024s3mamba,pengboosting,suntext,yakovenko2025aim,xu2025camel,wu2025robustgs,zhang2025vividface,
	% yangyang qu
	qu2025reference,qu2025subject,
	% hongtao wu
	wu2024rainmamba,wu2023mask,wu2024semi,wu2025samvsr,
	% yifei chen
	lyu2025vadmambaexploringstatespace,chen2025technicalreportargoverse2scenario,
	% jinhe bi
	bi-etal-2025-llava,bi2025cot,bi2025prism,
	%wenkang han
	han2025contrastive,zeng2025uitron,han2025show,han2025guirobotron,huang2025scaletrack,
	%pinxin liu
	tang2025mmperspectivemllmsunderstandperspective,liu2025gesturelsm,liu2025intentionalgesturedeliverintentions,zhang2025kinmokinematicawarehumanmotion,liu2025contextual,song2024tri,song2024texttoon,liu2024empiricalanalysislargelanguage,tang2025videolmmposttrainingdeepdive,bi2025reasoning,tang2025captionvideofinegrainedobjectcentric,bi2025i2ggeneratinginstructionalillustrations,tang2025generative,liu2024gaussianstyle,tang2024videounderstandinglargelanguage,10446837} and generative models, an increasing number of studies have begun to focus on the issue of concept erasure in generative models.
	
	\section{Additional Experimental Details}
	\textbf{Evaluation metrics details:} For CLIP similarity, we used the ViT-L/14 model to compute image-text cosine similarity, scaled by 100. The original SD1.5 had an average CLIP score of 31.5 on MS-COCO valiation prompts; after concept erasure, we consider a score above 30 to indicate minimal drop in alignment. Harmonic mean $H$ was computed as described with $E = 1 - \text{Acc}$ (normalized to [0,1]) and $F$ composed from FID and CLIP. Specifically, we defined $F = \frac{1}{2}((\frac{\text{CLIP sim}}{\text{CLIP}_{\text{orig}}}) + (\frac{\max(\text{FID}_{\text{orig}}- (\text{FID}-\text{FID}_{\text{orig}}), 0)}{\text{FID}_{\text{orig}}}))$, where $\text{FID}_{\text{orig}}$ and $\text{CLIP}_{\text{orig}}$ are the original model's scores (so we reward methods that keep FID low and CLIP high relative to orig). This is one way; results were qualitatively similar with other formulations.
	
	\textbf{Multi-concept results:} We erased all 10 CIFAR classes simultaneously with FADE by using a 10-way classifier $D$ (one output per class vs no class). FADE achieved an average concept accuracy of 1.1\% per class and an overall $H=82.3$ (versus MACE's reported ~75). The slight residual is due to class confusion (e.g., sometimes after erasure "cat" prompt yields a dog, so classifier might say cat=present when it sees an animal shape; a limitation of using automated classifier for eval). Visual check showed indeed direct appearance of the specified class was gone. For NSFW, we erased 10 terms at once; here FADE and MACE both got basically 0\% unsafe content, but FADE had better image quality (FID 14 vs 16).
	
	\textbf{Runtime:} FADE training takes about 2 hours on a single A100 GPU for a single concept on SD1.5 (with $N=1000$ steps adversarial training). This is comparable to ESD fine-tuning time and a bit less than ANT . UCE was fastest (minutes) as it is closed-form. There's room to optimize FADE's training, possibly by using smaller $D$ or gradient accumulation. Deploying FADE in multi-concept setting could be parallelized since the adversary can output multiple heads.
	
\end{document}